\newtheorem{lem}{Lemma}
\theoremstyle{definition}
\newtheorem{defn}{Definition}
\newtheorem{rem}{Remark}
\newtheorem*{remark}{Remark}
\theoremstyle{definition}
\title{\LARGE \bf
Closed-Form Minkowski Sum Approximations for Efficient Optimization-Based Collision Avoidance
}
\author{James Guthrie, Marin Kobilarov,\thanks{J. Guthrie and E. Mallada are with the Department of Electrical and Computer Engineering, Johns Hopkins University, Baltimore, MD 21218, USA. Email: {\tt\small \{jguthri6, mallada\}@jhu.edu}.} Enrique Mallada \thanks{M. Kobilarov is with the Department of Mechanical Engineering, Johns Hopkins University, Baltimore, MD 21218, USA. Email: {\tt\small marin@jhu.edu}
}}
\begin{document}

\maketitle
\thispagestyle{empty}
\pagestyle{empty}

\begin{abstract}
Motion planning methods for autonomous systems based on nonlinear programming offer great flexibility in incorporating various dynamics, objectives, and constraints. One limitation of such tools is the difficulty of efficiently representing obstacle avoidance conditions for non-trivial shapes. For example, it is possible to define collision avoidance constraints suitable for nonlinear programming solvers in the canonical setting of a circular robot navigating around $M$ convex polytopes over $N$ time steps. However, it requires introducing $(2+L)MN$ additional constraints and $LMN$ additional variables, with $L$ being the number of halfplanes per polytope, leading to larger nonlinear programs with slower and less reliable solving time. In this paper, we overcome this issue by building closed-form representations of the collision avoidance conditions by outer-approximating the Minkowski sum conditions for collision. Our solution requires only $MN$ constraints (and no additional variables), leading to a smaller nonlinear program. On motion planning problems for an autonomous car and quadcopter in cluttered environments, we achieve speedups of 4.8x and 8.7x respectively with significantly less variance in solve times and negligible impact on performance arising from the use of outer approximations.
\end{abstract}


\section{INTRODUCTION}
Motion planning is a central task of most autonomous systems, including robots, drones, and autonomous vehicles.
Of the many approaches to motion planning, techniques based on nonlinear programming (NLP) such as direct multiple shooting \cite{Bock1984} and direct collocation \cite{Biegler1984} generally offer the most flexibility in regards to choice of objectives and constraints imposed.
As high-quality NLP solvers and supporting automatic differentiation tools have become available, it has become feasible to utilize these optimization-based approaches for real-time motion planning or trajectory generation \cite{Liniger2015}.


Despite the flexibility that NLP solvers provide, it can be difficult to efficiently represent obstacle avoidance constraints. Due to their reliance on gradient and Hessian information, most NLP solvers require the objective and constraints to be twice continuously differentiable expressions. This presents a challenge for collision avoidance constraints which often cannot be represented in smooth closed-forms. We briefly review two viable approaches and discuss their advantages and limitations.
\begin{figure}[]
    \centering
    \includegraphics[keepaspectratio,,width=0.485\textwidth,trim={0 0.05cm 0 0},clip]{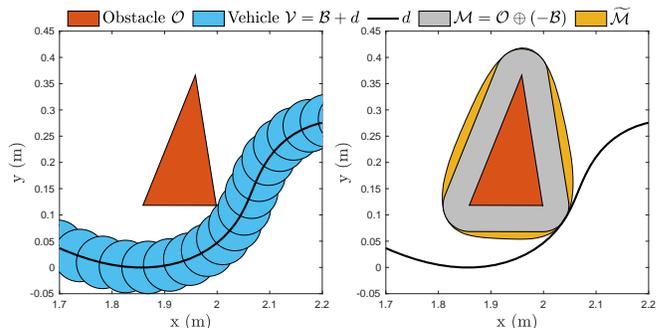}
    \vspace{-0.75cm}
    \caption{Obstacle avoidance in workspace (left) and C-space (right)}
    \vspace{-5mm}
    \label{fig:mink_traj_2x1}
\end{figure}

\noindent
\textbf{Distance Formulation:} Collision avoidance can be viewed as ensuring the minimum distance between an obstacle $\mathcal{O}$ and a vehicle $\mathcal{V}$ is greater than zero. In robotics this would be classified as performing collision checking directly in the workspace \cite{LaValle2006} as shown in the left subplot of Figure \ref{fig:mink_traj_2x1}. When the obstacle and the vehicle have convex shapes, the distance between these sets can be computed through convex optimization \cite{Boyd2004}. Using this formulation as a constraint leads to a bi-level NLP for which we lack reliable solvers. By leveraging strong duality~\cite{Zhang2021}, it is possible to reformulate the collision avoidance conditions into expressions amenable to a NLP solver. This is done at the expense of introducing new variables and constraints. In practice, the solver performance can be highly sensitive to the initialization of these variables~\cite{Zhang2018} and the resulting increase in problem complexity can be problematic for real-time motion planning in cluttered environments. Similar remarks hold for collision avoidance reformulations based on Farkas' Lemma~\cite{Gerdts2012} or polar set representations~\cite{Patel2011}.



\noindent
\textbf{Minkowski Sum Formulation:} Collision avoidance can alternatively be viewed through the lens of computational geometry as shown in the right subplot of Figure \ref{fig:mink_traj_2x1}.  Given the vehicle position $d \in \mathbb{R}^n$, and shapes $\mathcal{B},\mathcal{O} \subset \mathbb{R}^n$ of vehicle and obstacle, respectively, collision avoidance can be posed as ensuring $d \not\in \mathcal{M}$ where $\mathcal{M} = \mathcal{O} \oplus (-\mathcal{B})$, with $\oplus$ being the Minkowski sum operation~\cite{LaValle2006}.
In robotics this is often referred to as the configuration-space (C-space) approach.
Incorporating this as a constraint in an NLP solver would require a closed-form, smooth representation of the indicator function of this set. In general, this does not exist as the sets are semialgebraic, involving multiple polynomial (in)equalities. 
A notable exception is the case of bodies whose boundary surface are smooth and admit both implicit and parametric representations~\cite{Ruan2022}, which includes ellipsoids and convex superquadrics \cite{Ruan2019}.
However, many implicit surfaces do not admit a parametric representation and for others obtaining one is an open problem \cite{Hoffmann1990}. Additionally, this approach cannot address the practical case of non-smooth boundaries such as convex polytopes.

\newcommand{\Mouter}{\widetilde{\mathcal{M}}}

\subsection{Contributions}
In this work we propose efficient collision avoidance conditions based on closed-form, outer approximations $\Mouter \supseteq \mathcal{M}$ of the Minkowski sum. We focus on the important case in which the obstacle is a bounded, convex polytope and the vehicle is represented by Euclidean balls (possibly multiple). Building upon recent successes of sum-of-squares (SOS) optimization for outer approximating semialgebraic sets \cite{Magnani2005, Dabbene2017, Ahmadi2017, Guthrie2022Star, Guthrie2021}, we develop SOS programs for finding $\Mouter$. Figure \ref{fig:mink_traj_2x1} shows an example of the resulting outer approximations. We then use $\Mouter$ to perform optimization-based motion planning of an autonomous car and quadcopter navigating cluttered environments. Compared to the exact method~\cite{Zhang2021}, our approximate method solves 4.8x (car) and 8.7x (quadcopter) faster while introducing minimal conservatism arising from the use of outer approximations.


The rest of the paper is organized as follows. Section II reviews relevant aspects of convex sets, Minkowski sums, and SOS optimization. Section III defines the motion planning problem. Section IV poses the obstacle avoidance constraints using Minkowski sums and provides methods for outer approximating the set. Section V applies our approach to motion planning for an autonomous car and quadcopter. Section VI concludes the paper with a discussion of future directions.

\newcommand{\Set}[1]{\mathcal{R}_{#1}}
\newcommand{\Point}[1]{r_{#1}e^{i\theta_{#1}}}
\newcommand{\CPoint}[1]{\mathbf{r}_{#1}}
\newcommand{\Zero}{{0}}
\newcommand{\hull}{\bold{conv}}
\newcommand{\sos}[1]{\textstyle\sum[#1]}
\newcommand{\obs}{\mathcal{O}}

\newcommand{\TrajOpt}{(14)}
\newcommand{\TrajOptDyn}{(14c)}

\section{PRELIMINARIES}
We briefly review some basic properties of convex sets, Minkowski sums, and sum-of-squares polynomials. This is mostly done to setup our notation. The reader is referred to \cite{Boyd2004, Schneider2013, Parillo2000} for proofs and further details.
\subsection{Set Definitions}

\begin{defn}[Convex Hull]\label{defn:cvxhull}
The convex hull of a set $\mathcal{B}$ is defined as: 
$\bold{conv}\;\mathcal{B} = \{\theta_1x_1 + \hdots +\theta_kx_k\ \, | \,  x_i \in \mathcal{B}, \theta_i \geq 0, i = 1,\hdots,k,\sum_{i=1}^k\theta_i = 1\}$. Let $\mathcal{C}$ be any convex set that contains $\mathcal{B}$. The convex hull is the smallest convex set that contains $\mathcal{B}$:
\begin{equation}\label{eqn:cvxhullproperty}
    \mathcal{B} \subseteq \mathcal{C} \Leftrightarrow \bold{conv}\;\mathcal{B} \subseteq \mathcal{C}
\end{equation}
\end{defn}

\begin{defn}[$\alpha$-sublevel Set]\label{defn:alphasublevel}
The $\alpha$-sublevel set of a function $f:\mathbb{R}^n \rightarrow \mathbb{R}$ is: $B_\alpha = \{x \,|\, f(x) \leq \alpha\}$. We denote the boundary of the set as $\partial B_\alpha = \{x \,|\, f(x) = \alpha\}$.
\end{defn}

\begin{lem}\label{lem:dBhull}
Let $\mathcal{B}_\alpha$ be a convex set that is the $\alpha$-sublevel set of a function $f:\mathbb{R}^n \rightarrow \mathbb{R}$. Then $\mathcal{B}_\alpha = \bold{conv}\; \partial \mathcal{B}_\alpha$.
\end{lem}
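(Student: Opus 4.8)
The plan is to establish the set equality $\mathcal{B}_\alpha = \bold{conv}\,\partial\mathcal{B}_\alpha$ by proving the two inclusions separately, since they turn out to be of very different difficulty.

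The inclusion $\bold{conv}\,\partial\mathcal{B}_\alpha \subseteq \mathcal{B}_\alpha$ is immediate. By Definition~\ref{defn:alphasublevel}, the level set $\partial\mathcal{B}_\alpha = \{x : f(x) = \alpha\}$ is contained in $\mathcal{B}_\alpha = \{x : f(x) \le \alpha\}$. Since $\mathcal{B}_\alpha$ is convex by hypothesis, I would simply invoke the convex-hull property \eqref{eqn:cvxhullproperty} of Definition~\ref{defn:cvxhull} with $\mathcal{C} = \mathcal{B}_\alpha$ and $\mathcal{B} = \partial\mathcal{B}_\alpha$: from $\partial\mathcal{B}_\alpha \subseteq \mathcal{B}_\alpha$ together with convexity of $\mathcal{B}_\alpha$ we conclude $\bold{conv}\,\partial\mathcal{B}_\alpha \subseteq \mathcal{B}_\alpha$. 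No further work is needed here.

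The reverse inclusion $\mathcal{B}_\alpha \subseteq \bold{conv}\,\partial\mathcal{B}_\alpha$ is the substantive part. The idea is a chord argument. Take an arbitrary $x \in \mathcal{B}_\alpha$. If $f(x) = \alpha$, then $x \in \partial\mathcal{B}_\alpha$ and we are done. Otherwise $f(x) < \alpha$ and $x$ lies in the interior of $\mathcal{B}_\alpha$; fixing any direction $v$, I would follow the two opposing rays $x \pm t v$ for $t \ge 0$. Because $\mathcal{B}_\alpha$ is bounded, each ray leaves the set after a finite parameter value, yielding two endpoints $y_1, y_2$ at which $f = \alpha$, with $x$ lying strictly between them. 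Then $x = \lambda y_1 + (1-\lambda)y_2$ for some $\lambda \in (0,1)$, exhibiting $x$ as a convex combination of two points of $\partial\mathcal{B}_\alpha$, so $x \in \bold{conv}\,\partial\mathcal{B}_\alpha$.

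The hard part is making the chord endpoints rigorous, and this is where the unstated regularity of the setting must be used. Two properties are needed: (i) boundedness of $\mathcal{B}_\alpha$, so that $T = \sup\{t \ge 0 : x + tv \in \mathcal{B}_\alpha\}$ is finite and the endpoint $y = x + Tv$ genuinely exists; and (ii) continuity of $f$ (equivalently, closedness of the sublevel set), so that $f(y) = \alpha$ rather than some value strictly below $\alpha$ produced by a jump discontinuity. Both hold for the convex bodies used in this paper (sublevel sets of norms and the gauges of polytopes and balls), so I would either add them as standing hypotheses or remark that they are in force, and I would separately dispatch the degenerate cases of a lower-dimensional or singleton $\mathcal{B}_\alpha$ by running the same argument inside the affine hull of $\mathcal{B}_\alpha$, with continuity again guaranteeing $\partial\mathcal{B}_\alpha$ is nonempty. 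It is worth flagging that boundedness is essential rather than incidental: a closed half-space is the $0$-sublevel set of a linear $f$, yet its level set is a hyperplane whose convex hull is merely that hyperplane, so the claim fails without compactness.
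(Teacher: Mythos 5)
Your proof is correct, but there is nothing in the paper to compare it against step for step: Lemma~\ref{lem:dBhull} is stated in the preliminaries as a standard fact, with only a blanket citation to \cite{Boyd2004, Schneider2013, Parillo2000} covering the whole section, so you have supplied a proof where the paper supplies none. Your argument is a sound, self-contained substitute: the inclusion $\hull\,\partial\mathcal{B}_\alpha \subseteq \mathcal{B}_\alpha$ follows exactly as you say from $\partial\mathcal{B}_\alpha \subseteq \mathcal{B}_\alpha$, convexity, and \eqref{eqn:cvxhullproperty}, and the chord argument for the reverse inclusion is rigorous once the exit parameter $T = \sup\{t \geq 0 : x + tv \in \mathcal{B}_\alpha\}$ is finite (boundedness) and $f(x+Tv) = \alpha$ (continuity, squeezing $f \leq \alpha$ along a sequence in the set against $f > \alpha$ just beyond $T$). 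More importantly, you caught something real: as literally stated the lemma is false, and your half-space counterexample ($f(x) = a^Tx$, whose $0$-sublevel set is a closed half-space while the convex hull of the level set $\{x : a^Tx = 0\}$ is only that hyperplane) shows boundedness cannot be dropped; likewise a discontinuous $f$ can make $\{x : f(x)=\alpha\}$ empty even for a bounded sublevel set. These extra hypotheses are harmless in context, since the lemma is invoked only inside Lemma~\ref{lem:minksum_extension} with $\mathcal{B}$ a Euclidean ball ($f(w) = w^Tw$, continuous, with compact sublevel sets), but a careful statement would include them. One simplification to your write-up: the degenerate lower-dimensional and singleton cases need no separate affine-hull argument, because if $f$ is continuous and some $x \in \mathcal{B}_\alpha$ has $f(x) < \alpha$ then $x$ is an interior point and $\mathcal{B}_\alpha$ is full-dimensional; hence in any degenerate case $f \equiv \alpha$ on $\mathcal{B}_\alpha$, so $\mathcal{B}_\alpha = \partial\mathcal{B}_\alpha$ and the claim is immediate.
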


We use the notation $-\mathcal{B} = \{-b \,|\, b \in \mathcal{B} \}$ to represent the set $\mathcal{B}$ reflected about the origin. Note that $-\mathcal{B}$ is convex if and only $\mathcal{B}$ is convex.

\begin{defn}[Polytope]\label{defn:polytope}
A polytope $\mathcal{P}$ is defined as the solution set of $j$ linear inequalities in $\mathbb{R}^n$. This set is convex by construction. We impose the additional requirement that the set is bounded. The linear inequalities give the halfspace representation
\begin{equation}\label{eqn:polytopedef_halfspace}
    \mathcal{P} = \{x \,|\, Ax \leq b \}
\end{equation}
where $A \in \mathbb{R}^{j \times n}, b \in \mathbb{R}^j$. Alternatively, the polytope can be represented by the convex hull of its $k$ vertices
\begin{equation}\label{eqn:polytopedef_vertex}
    \mathcal{P} = \bold{conv}\;\{v_1, v_2,\hdots,v_k\}
\end{equation}
where $v_i \in \mathbb{R}^n, i \in [k]:= \{1,\hdots,k\}$.
\end{defn}

\subsection{Minkowski Sum Properties}
\begin{defn}[Minkowski Sum] Given two sets $\mathcal{A}, \mathcal{B}$, their Minkowski sum is defined as follows:
\begin{equation}
    \mathcal{A}\oplus\mathcal{B} = \{a + b \,|\, a \in\mathcal{A}, b\in\mathcal{B}\}
\end{equation}
\end{defn}\label{defn:minksum}

\begin{lem}\label{lem:minksum}
If $A$ and $B$ are convex sets then $A \oplus B$ is convex.
\end{lem}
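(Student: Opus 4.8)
The plan is to prove convexity directly from the definition, by showing that any convex combination of two points in $A \oplus B$ again lies in $A \oplus B$. Recall a set is convex precisely when, for every pair of its points and every $\theta \in [0,1]$, the point $\theta p + (1-\theta)q$ belongs to the set; so I would fix two arbitrary elements $p, q \in A \oplus B$ together with a scalar $\theta \in [0,1]$ and aim to exhibit an explicit decomposition certifying membership.

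First I would unpack the definition of the Minkowski sum to write $p = a_1 + b_1$ and $q = a_2 + b_2$ for some $a_1, a_2 \in A$ and $b_1, b_2 \in B$. The key step is then a regrouping of terms: I would compute $\theta p + (1-\theta)q$ and rearrange it into a part drawn purely from $A$ and a part drawn purely from $B$, namely $\theta p + (1-\theta)q = \bigl(\theta a_1 + (1-\theta)a_2\bigr) + \bigl(\theta b_1 + (1-\theta)b_2\bigr)$. This algebraic separation is the crux of the argument and is what lets the two hypotheses be applied independently.

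Next I would invoke convexity of each summand set separately: since $A$ is convex, $\theta a_1 + (1-\theta)a_2 \in A$, and since $B$ is convex, $\theta b_1 + (1-\theta)b_2 \in B$. The decomposition above therefore expresses $\theta p + (1-\theta)q$ as a sum of an element of $A$ and an element of $B$, which is exactly the membership condition for $A \oplus B$. Since $p$, $q$, and $\theta$ were arbitrary, this establishes convexity and completes the proof.

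I do not expect any genuine obstacle here; the argument is elementary. The only point requiring slight care is the rearrangement in the second step, where the scalar $\theta$ must be distributed over both the $A$-components and the $B$-components simultaneously so that the two convexity hypotheses can be applied cleanly and independently.
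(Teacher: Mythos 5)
Your proof is correct: the decomposition $\theta p + (1-\theta)q = \bigl(\theta a_1 + (1-\theta)a_2\bigr) + \bigl(\theta b_1 + (1-\theta)b_2\bigr)$ followed by applying convexity of $A$ and $B$ separately is exactly the standard argument. The paper itself gives no proof of this lemma (it defers to references such as \cite{Boyd2004, Schneider2013}), and your direct definitional argument is precisely the proof found in those sources, so there is nothing to add or correct.
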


\begin{lem}\label{lem:minkdistributive}
For any sets $\mathcal{A}, \mathcal{B}$ the following equality holds:
\begin{equation}
    \bold{conv}\;(A \oplus B) = \bold{conv}\;(A) \oplus \bold{conv}\;(B)
\end{equation}

\end{lem}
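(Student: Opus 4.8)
The plan is to prove the set equality by establishing the two inclusions separately, since two subsets of $\mathbb{R}^n$ coincide exactly when each contains the other. The first inclusion follows almost immediately from the results already available, while the reverse inclusion requires an explicit manipulation of convex combinations and is where I expect the real work to lie.

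For the direction $\bold{conv}(A \oplus B) \subseteq \bold{conv}(A) \oplus \bold{conv}(B)$, I would begin from the trivial containments $A \subseteq \bold{conv}\,A$ and $B \subseteq \bold{conv}\,B$. These give $A \oplus B \subseteq \bold{conv}\,A \oplus \bold{conv}\,B$ directly from the definition of the Minkowski sum. By Lemma \ref{lem:minksum} the right-hand side is convex, being the Minkowski sum of two convex sets, so invoking the minimality property \eqref{eqn:cvxhullproperty} of the convex hull yields $\bold{conv}(A \oplus B) \subseteq \bold{conv}\,A \oplus \bold{conv}\,B$ with essentially no computation.

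The reverse inclusion $\bold{conv}(A) \oplus \bold{conv}(B) \subseteq \bold{conv}(A \oplus B)$ is the main obstacle. Here I would take an arbitrary point of the left-hand side and write it as $p + q$ with $p = \sum_{i} \theta_i a_i$ and $q = \sum_{j} \mu_j b_j$, where $a_i \in A$, $b_j \in B$, the $\theta_i, \mu_j$ are nonnegative, and $\sum_i \theta_i = \sum_j \mu_j = 1$ (using Definition \ref{defn:cvxhull}). The key step is to exploit the fact that both coefficient families sum to one in order to regroup the expression as $p + q = \sum_{i,j} \theta_i \mu_j (a_i + b_j)$. Since each $a_i + b_j \in A \oplus B$, the new coefficients $\theta_i \mu_j$ are nonnegative, and $\sum_{i,j} \theta_i \mu_j = \bigl(\sum_i \theta_i\bigr)\bigl(\sum_j \mu_j\bigr) = 1$, this displays $p+q$ as a genuine convex combination of points of $A \oplus B$, hence $p + q \in \bold{conv}(A \oplus B)$.

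The delicate part is the bookkeeping in this double-sum regrouping, namely verifying that the product coefficients $\theta_i \mu_j$ again form a valid convex combination; this reduces to the finite algebraic identity above rather than any genuine analytic difficulty. Combining the two inclusions then gives the claimed equality, and I note that the argument uses no structure beyond the definitions, so it holds for arbitrary (not necessarily convex or bounded) sets $A$ and $B$.
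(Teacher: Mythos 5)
Your proof is correct. Note, however, that the paper itself does not prove this lemma: it is stated in the Preliminaries as a standard fact, with the reader referred to the cited texts (Boyd and Vandenberghe, Schneider) for proofs, so there is no in-paper argument to compare against. Your argument is the standard self-contained one and it is complete: the easy inclusion $\bold{conv}(A \oplus B) \subseteq \bold{conv}(A) \oplus \bold{conv}(B)$ follows, exactly as you say, from $A \oplus B \subseteq \bold{conv}(A) \oplus \bold{conv}(B)$, the convexity of the right-hand side (Lemma~\ref{lem:minksum}), and the minimality property \eqref{eqn:cvxhullproperty}; and the reverse inclusion is correctly handled by the product-coefficient regrouping
\begin{equation*}
p + q \;=\; \sum_{i} \theta_i a_i + \sum_{j} \mu_j b_j \;=\; \sum_{i,j} \theta_i \mu_j \left( a_i + b_j \right),
\end{equation*}
whose validity rests precisely on the identity $\sum_{i,j}\theta_i\mu_j = \bigl(\sum_i \theta_i\bigr)\bigl(\sum_j \mu_j\bigr) = 1$ that you verify. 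Your closing observation is also right and worth keeping: the argument uses only Definition~\ref{defn:cvxhull} and the definition of the Minkowski sum, so the identity holds for arbitrary sets $A$, $B$ with no convexity or boundedness hypotheses, which is exactly the generality in which the paper states the lemma.
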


\subsection{Sum-of-Squares Optimization}
For $x \in \mathbb{R}^n$, let $\mathbb{R}[x]$ denote the set of polynomials in $x$ with real coefficients. 
\begin{defn}[Sum-of-Squares Polynomial]\label{defn:sospoly}
A polynomial $p(x) \in \mathbb{R}[x]$ is a sum-of-squares (SOS) polynomial if there exists polynomials $q_i(x) \in \mathbb{R}[x], i \in [j]$ such that $p(x) = \underset{i \in [j]}{\sum} q_i^2(x)$. We use $\sos{x}$ to denote the set of SOS polynomials in $x$.  A polynomial of degree $2d$ is a SOS polynomial if and only if there exists a positive semi-definite matrix $P$ (the Gram matrix) such that $p(x) = z(x)^TPz(x)$ where $z(x)$ is the vector of all monomials of $x$ up to degree $d$ \cite{Parillo2000}. 
\end{defn}
Note that a polynomial being SOS is a sufficient condition for the polynomial to be non-negative (i.e. $p(x) \geq 0 \, \forall \, x$).
\begin{defn}[SOS-Convex]\label{defn:soscvx}
A polynomial $p(x)$ is SOS-convex if the following holds
\begin{equation}
    u^T\nabla^2p(x)u \in \sos{x,u}
\end{equation}
where $u, x \in \mathbb{R}^n$. SOS-convexity is a sufficient condition for the Hessian of $p(x)$ to be positive semi-definite and therefore $p(x)$ to be convex.
\end{defn}

 In the development that follows, we will be interested in solving slight variations of the following problem.
 
 \begin{mini!}|s|
{P}{-{\log\det}P} {}{}
\addConstraint{P \succeq 0, \quad}{p(x) = z(x)^TPz(x), \,}{\label{con:sos}}
\addConstraint{1 - p(x)}{\geq 0 \quad \forall \, x \in \mathcal{X},}{\label{con:set_contain}}
\end{mini!}
Here $\mathcal{X}$ is a semialgebraic set defined by $n_i$ polynomial inequalities and $n_j$ polynomial equalities.
 \begin{equation}
     \mathcal{X} = \{x \,|\, g_{i}(x) \geq 0, i \in [n_i], h_{j}(x) = 0, j \in [n_j] \}
 \end{equation}
 Equation \eqref{con:sos} constrains $p(x)$ to be a SOS polynomial. Equation \eqref{con:set_contain} is a set-containment condition. The generalized $\mathcal{S}$-procedure provides a sufficient condition for the set-containment to hold \cite{Parillo2000}. For each polynomial equality $g_i(x)$ or inequality $h_j(x)$ describing the set $\mathcal{X}$, we introduce a non-negative polynomial $\lambda_i(x)$ or polynomial $\mu_j(x)$ respectively. The generalized $\mathcal{S}$-procedure involves replacing \eqref{con:set_contain} with the following:
 \begin{align}
     1 - p(x) - \displaystyle\sum_i\lambda_i(x)g_i(x) - \displaystyle\sum_j\mu_j(x)h_j(x) \geq 0, \label{con:poly1}\\
     \lambda_i(x) \geq 0 \quad i \in [n_i] \label{con:poly2}
 \end{align}
By replacing the non-negativity constraints in \eqref{con:poly1}, \eqref{con:poly2} with the more restrictive condition that the expressions be SOS polynomials, we obtain a semidefinite program which is readily solved.  
 \begin{mini}|s| 
{P, \lambda_{[1:n_i]}(x), \mu_{[1:n_j]}(x)}{-{\log\det}P} {}{}
\addConstraint{P \succeq 0, \quad}{p(x) = z(x)^TPz(x), \,}{}
\addConstraint{1 - p(x) - \displaystyle\sum_i\lambda_i(x)g_i(x) - \displaystyle\sum_j\mu_j(x)h_j(x)}{\in \displaystyle\sum[x],}{\label{con:set_contain_sproc}}
\addConstraint{\lambda_i(x)}{\in \sos{x}, \quad i\in [n_i]\,.} {}
\end{mini}
Note when a polynomial is listed as a decision variable, e.g., $\lambda_{[1:n_i]}(x)$ and $\mu_{[1:n_j]}(x)$  underneath the $\min$, it is implied that the monomial basis is specified and the coefficients are decision variables.
\begin{rem}\label{rem:sosequality}
Representing an equality constraint requires introducing a polynomial $\mu(x)$. In contrast, representing an inequality requires introducing a SOS polynomial $\lambda(x)$ which has a smaller feasible set and creates an additional semidefinite constraint. As such, it is generally advantageous to represent sets using equalities when applying the generalized $\mathcal{S}$-procedure.
\end{rem}
In the development that follows we focus on transforming problems of interest into the form of (7). Once in this form, the subsequent application of the generalized $\mathcal{S}$-procedure is mechanical.

\section{Problem Statement}
We now setup the problem of optimization-based motion planning with collision avoidance constraints. For convenience, our notation closely follows that of \cite{Zhang2018}.

\subsection{Vehicle and Obstacle Models}
Consider a vehicle with states $x_k \in \mathbb{R}^{n_x}$ and inputs $u_k \in \mathbb{R}^{n_u}$ at time step $k$. The dynamics evolve according to $x_{k+1} = f(x_k, u_k)$ where $f: \mathbb{R}^{n_x} \times \mathbb{R}^{n_u} \rightarrow \mathbb{R}^{n_x}$. The vehicle occupies space in $\mathbb{R}^n$. The vehicle's shape is assumed to be represented by $n_b$ Euclidean balls with radii $r^{(i)}$.
\begin{equation}
    \mathcal{B}^{(i)} = \{ y \in \mathbb{R}^n \,|\, \|y\|_2 \leq r^{(i)} \},\; i \in [n_b].
\end{equation}
The center of each ball is a function of the vehicle's state as given by $t^{(i)}: \mathbb{R}^{n_x} \rightarrow \mathbb{R}^n$. At time index $k$, the space occupied by ball $i$ is given by:
\begin{equation}
    \mathcal{V}^{(i)}(x_k) = \mathcal{B}^{(i)} + t^{(i)}(x_k)\,.
\end{equation}
The union $\underset{i}{\bigcup}\mathcal{V}^{(i)}(x_k)$  gives the total space occupied by the vehicle at time index $k$. {For  ease of exposition, in what follows} we focus w.l.o.g. on the case when the vehicle is represented by a single ball and drop the superscript $(i)$.

We assume there are $M$ obstacles present in the environment indexed by $m\in[M]$. Each obstacle $\mathcal{O}^{(m)}$ is a polytope (closed, convex) with $k^{(m)}$ vertices $\{v_1,\hdots,v_{k^{(m)}}\}$ defining the convex hull as in \eqref{eqn:polytopedef_vertex}. Equivalently represented in halfspace form \eqref{eqn:polytopedef_halfspace}, the obstacle $m$ is defined by $j^{(m)}$ constraints given by $A^{(m)} \in \mathbb{R}^{j^{(m)} \times n}, b^{(m)} \in \mathbb{R}^{j^{(m)}}$.

\subsection{Optimal Control Problem}
We consider an optimal control problem of controlling the vehicle over $N$ steps. The vehicle starts at state $x_S$ and must end at final state $x_F$. Let $X, U$ denote the vector of all states and controls respectively, $X = [x_0^T,\hdots, x_{N}^T]^T, U = [u_0^T,\hdots,u_{N-1}]^T$. We seek to minimize an objective $l(X,U)$ where $l: X \times U \rightarrow \mathbb{R}$. Additionally, the vehicle is subject to $n_h$ constraints given by $h(X,U) \leq 0$ where $h: X \times U \rightarrow \mathbb{R}^{n_h}$ and the inequality is interpreted element-wise. We assume that $l(X,U)$ and $h(X,U)$ are continuously differentiable and therefore suitable for nonlinear programming solvers which utilize gradient and Hessian information. Lastly, we enforce collision avoidance constraints between each obstacle and the vehicle. The resulting optimization problem takes the following form:

\begin{mini!}|s|
{X,U}{l(X,U)} {}{}
\addConstraint{x_0 = x_S, \quad}{x_N = x_F,}{}
\addConstraint{x_{k+1}}{=f(x_k,u_k),}{k = 0,\hdots,N-1}
\addConstraint{h(X,U)}{\leq 0,}{}
\addConstraint{\mathcal{V}(x_k) \cap \mathcal{O}^{(m)}}{=\emptyset,}{k \in [N], \, m \in[M]. \label{con:collision_set}} 
\end{mini!}
Equation \eqref{con:collision_set} represents the collision avoidance constraints which are non-convex and non-smooth in general. In \cite{Zhang2021}, the authors provide an exact, smooth reformulation of these constraints. As the distance between two convex shapes can be computed using convex optimization, the authors leverage strong duality to develop necessary and sufficient conditions for a Euclidean ball of radius $r$ to not intersect a given convex shape. This requires introducing dual variables associated with the halfspace constraints representing each obstacle $\lambda_k^{(m)} \in \mathbb{R}^{j^{(m)}}$, $k\in[N]$, $m \in[M]$ and replacing \eqref{con:collision_set} with the following constraints.
\begin{align}\label{eqn:OBCA}
\begin{split}
    (A^{(m)}t(x_k) - b^{(m)})^T\lambda_k^{(m)} > r, \\
  \| A^{(m)^T} \lambda_k^{(m)} \| ^2_2 \leq 1, \\
   \lambda_k^{(m)} \geq 0,\\
   k \in [N], \, m \in [M].
\end{split}
\end{align}
If each obstacle has $L$ halfspace constraints, this method introduces $(2+L)MN$ constraints and $LMN$ dual variables which can result in a large nonlinear program that is computationally intensive. In the following, we present a method for approximating the collision avoidance constraints while introducing only $MN$ constraints and no additional variables.

\newcommand{\Sapprox}{\widetilde{\mathcal{M}}}
\section{Collision Avoidance via Minkowski Sums}
We will utilize Minkowski sums to represent the collision avoidance constraints between a closed, convex polytope obstacle $\mathcal{O} = \{y \in \mathbb{R}^n \,|\, a_i^Ty \leq b_i, i \in [L] \}$ and a vehicle with shape given by the Euclidean ball $\mathcal{B} = \{w \in \mathbb{R}^n\,|\, w^Tw \leq r^2\}$. We first review a fundamental result from computational geometry.  
\begin{lem}\label{lem:minkcollision}
Let $\mathcal{O}$ and $\mathcal{B}$ be sets in $\mathbb{R}^n$. Let $\mathcal{V} = \mathcal{B} + d$ be the set $\mathcal{B}$ translated by $d \in \mathbb{R}^n$. Then the following relation holds:
\begin{equation}
    \mathcal{O} \cap \mathcal{V} \neq \emptyset \Leftrightarrow d \in \mathcal{O} \oplus (-\mathcal{B})
\end{equation}
\end{lem}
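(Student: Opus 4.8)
The plan is to prove both directions of the equivalence at once by unwinding the definitions of set intersection, set translation, and the Minkowski sum; no deep machinery is required, since the claim is essentially a restatement of these definitions. The key observation is that a point lies in the translated body $\mathcal{V} = \mathcal{B} + d$ precisely when it can be written as $b + d$ for some $b \in \mathcal{B}$.

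First I would characterize the left-hand side. By definition, $\mathcal{O} \cap \mathcal{V} \neq \emptyset$ holds if and only if there exists a point $p \in \mathbb{R}^n$ with $p \in \mathcal{O}$ and $p \in \mathcal{V}$. Using $\mathcal{V} = \mathcal{B} + d$, the second membership is equivalent to writing $p = b + d$ for some $b \in \mathcal{B}$. Substituting, the nonempty-intersection condition becomes: there exist $o \in \mathcal{O}$ and $b \in \mathcal{B}$ such that $o = b + d$.

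Next I would rearrange this relation to isolate $d$, obtaining $d = o - b$ with $o \in \mathcal{O}$ and $b \in \mathcal{B}$. Recalling the reflection notation $-\mathcal{B} = \{-b \mid b \in \mathcal{B}\}$ together with the definition of the Minkowski sum, I have
\[
\mathcal{O} \oplus (-\mathcal{B}) = \{o - b \mid o \in \mathcal{O},\, b \in \mathcal{B}\}.
\]
Thus $d = o - b$ for some admissible $o, b$ is exactly the statement $d \in \mathcal{O} \oplus (-\mathcal{B})$, which closes the chain of equivalences and establishes the lemma in both directions simultaneously.

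The only point demanding care — rather than a genuine obstacle — is the bookkeeping of signs: it is the reflection $-\mathcal{B}$, and not $\mathcal{B}$ itself, that appears in the Minkowski sum, and this is precisely what converts the ``$+d$'' translation of $\mathcal{B}$ into the ``$-b$'' term. Keeping the direction of the translation consistent throughout ensures the two conditions match term-for-term, so each implication follows immediately from the preceding equivalence with no residual work.
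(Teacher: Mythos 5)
Your proof is correct. Note that the paper itself offers no argument for this lemma at all---its ``proof'' is simply a citation to standard computational-geometry references (LaValle; de Berg et al.)---so your contribution is to supply the elementary argument the paper defers to the literature. Your chain of definitional equivalences ($\mathcal{O} \cap \mathcal{V} \neq \emptyset$ iff $o = b + d$ for some $o \in \mathcal{O}$, $b \in \mathcal{B}$, iff $d = o - b \in \mathcal{O} \oplus (-\mathcal{B})$) is precisely the standard textbook derivation, handles both directions simultaneously since every step is an equivalence, and gets the sign convention right; it can stand verbatim as a self-contained proof of the lemma.
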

\begin{proof}
See, e.g. \cite{LaValle2006, Berg2008}
\end{proof}
In words, when $\mathcal{B}$ is located at position $d$, it makes contact with $\mathcal{O}$ if and only if $d$ is in the Minkowski sum $\mathcal{O} \oplus (\mathcal{-B})$. Thus collision avoidance with respect to obstacle $\mathcal{O}$ is equivalent to ensuring $d \not\in \mathcal{O} \oplus (\mathcal{-B})$.

When $\mathcal{O}$ is a polytope and $\mathcal{B}$ is a Euclidean ball, the set $\mathcal{O} \oplus (-\mathcal{B})$ is semialgebraic. As such we cannot directly include the condition $d \not\in \mathcal{O} \oplus (-\mathcal{B})$ as a constraint in a nonlinear optimization problem which requires closed-form, twice differentiable expressions.

Instead we propose to find an outer approximation $\mathcal{O} \oplus (-\mathcal{B}) \subseteq \Sapprox \subset \mathbb{R}^n$ where $\Sapprox$ is defined as the 1-sublevel set of a function $p: \mathbb{R}^n \rightarrow \mathbb{R}$. 
Recall in our setting the translation of the ball at time index $k$ is a function of the vehicle's state $x_k$ as given by $t: \mathbb{R}^{n_x} \rightarrow \mathbb{R}^n$. Collision avoidance with respect to obstacle $\obs$ can then be ensured by imposing the constraint $p(t(x_k)) > 1 \Leftrightarrow t(x_k) \not\in \Sapprox \Rightarrow t(x_k) \not\in \mathcal{O} \oplus (-\mathcal{B}) \Leftrightarrow \mathcal{O} \cap \mathcal{V} = \emptyset$. 

If multiple obstacles $\obs^{(m)}, m \in[M]$ are present, we repeat this process for each obstacle and denote the associated function as $p^{(m)}(x)$. In our trajectory optimization problem we replace \eqref{con:collision_set} with $MN$ constraints.
\begin{equation}
    p^{(m)}(t(x_k)) > 1, \quad k \in[N], \quad m \in[M]. \label{eqn:polyobs}
\end{equation}

\subsection{Outer Approximations of the Minkowski Sum}
We would like our outer approximations to closely approximate the true set. To do so, we pose an optimization problem in which we minimize the volume of the outer approximation. 

\begin{mini}|s|
{p(x)}{\text{vol }\Sapprox} {}{}
\addConstraint{1 - p(x)}{\geq 0}{\forall \, x \in \mathcal{O} \oplus (-\mathcal{B}),}
\addConstraint{\Sapprox}{= \{x \,|\, p(x) \leq 1\}}{}
\label{opti:logdet_semialg} 
\end{mini}


In general we cannot solve this optimization problem. To arrive at a tractable formulation, we apply the generalized $\mathcal{S}$-procedure. We first parameterize the polynomial as $p(x) = z(x)^TPz(x)$ where $z(x)$ is a monomial basis chosen by the user and $P$ is a positive semi-definite matrix of appropriate dimension.  For arbitrary polynomials, we lack an expression for minimizing the volume of the 1-level set.  Various heuristics have been proposed \cite{Magnani2005, Dabbene2017, Ahmadi2017, Guthrie2022Star}. We have found maximizing the determinant of $P$, as proposed in \cite{Ahmadi2017}, to work well for the problems herein. The resulting optimization problem is
\begin{mini}|s|
{P}{-{\log\det}P} {}{}
\addConstraint{p(x) }{= z(x)^TPz(x),}{\quad P\succeq 0,}
\addConstraint{1 - p(x)}{\geq 0}{\,\forall \, x \in \{y - w \,|\, a_i^Ty \leq b_i,}
\addConstraint{}{}{\quad \quad \quad w^Tw \leq r^2,i \in[L]\}}
\label{opti:outerapprox} 
\end{mini}
where we have explicitly written the set resulting from the Minkowski sum in terms of $y$ and $w$ along with inequalities that ensure $y \in \obs$ and $w \in \mathcal{B}$. 

We apply the $\mathcal{S}$-procedure to replace the set-containment condition with a sufficient condition. This requires introducing multipliers $\lambda(y,w)$. We then replace the non-negativity conditions with the sufficient condition that the expression admits a SOS decomposition in terms of variables $y$ and $w$.

\noindent\textbf{Optimization Problem 1: Outer Approximation}
\begin{mini*}|s|
{P, \lambda_{[0:L]}(y,w)}{-{\log\det}P} {}{}
\addConstraint{p(x) = z(x)^TPz(x),}{\quad P\succeq 0,}{\tag{OA} \label{opti:outerapprox}}
\addConstraint{1 - p(y-w)- \lambda_0(y,w)(r^2 - w^Tw)}{}{}
\addConstraint{\quad \quad -\sum_{i=1}^L\lambda_i(y,w)(b_i - a_i^Ty)}{\in \sum[y,w]}{}
\addConstraint{\lambda_i(y,w)}{\in \sos{y,w} \quad i = 0,\hdots,L}{}
\end{mini*}
The formulation given by \eqref{opti:outerapprox} is viable but computationally expensive because the SOS decompositions involve both $w$ and $y$ giving $2n$ free variables for $x \in \mathbb{R}^n$. As we seek higher-order approximations, the monomial basis grows rapidly in size leading to large semidefinite programs. We now develop a computationally cheaper program by leveraging convexity.

\subsection{Convex Outer Approximations of the Minkowski Sum}
In developing an efficient method for outer approximating the Minkowski sum, we will utilize the following Lemma.
\begin{lem}\label{lem:minksum_extension}
Let $\mathcal{O} \subset \mathbb{R}^n$ be a polytope with $K$ vertices $\{v_i\}, i \in [K]$. Let $\mathcal{B} \subset \mathbb{R}^n$ be a convex set that is the $\alpha$-sublevel set of a function $f:\mathbb{R}^n \rightarrow \mathbb{R}$. Let $S$ be any convex set in $\mathbb{R}^n$. Then the following relation holds:
\begin{equation}
    \mathcal{O} \oplus (-\mathcal{B}) \subseteq S \Leftrightarrow \{v_i\} \oplus (-\partial \mathcal{B}) \subseteq S
\end{equation}
\begin{proof}
First represent $\mathcal{O} \oplus (-\mathcal{B})$ in terms of its convex hull.
\begin{align*}
   \mathcal{O} \oplus (-\mathcal{B}) &= \bold{conv}\{v_i\} \oplus \hull(-\partial \mathcal{B}), &&\textbf{Lemma \ref{lem:dBhull}}   \\
&= \hull[\{v_i\} \oplus (-\partial \mathcal{B})], &&\textbf{Lemma \ref{lem:minkdistributive}}
\end{align*}
Next apply the property of the convex hull \eqref{eqn:cvxhullproperty}.
\begin{align*}
   \hull[\{v_i\} \oplus (-\partial \mathcal{B})] \subseteq S \Leftrightarrow \{v_i\} \oplus (-\partial \mathcal{B}) \subseteq S
\end{align*}
\end{proof}

\end{lem}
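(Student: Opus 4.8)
The plan is to collapse the left-hand containment to a statement about the single ``generator'' set $\{v_i\} \oplus (-\partial\mathcal{B})$ and then strip off a convex hull using the hypothesis that $S$ is convex. The guiding observation is that \emph{both} factors of the Minkowski sum on the left are themselves convex hulls of smaller sets: $\mathcal{O}$ is the convex hull of its vertices by the vertex representation \eqref{eqn:polytopedef_vertex}, and $-\mathcal{B}$ is the convex hull of the reflected boundary $-\partial\mathcal{B}$.

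First I would rewrite the reflected body. Since $\mathcal{B}$ is a convex $\alpha$-sublevel set, Lemma~\ref{lem:dBhull} gives $\mathcal{B} = \hull\,\partial\mathcal{B}$, and because $x \mapsto -x$ is a linear map it commutes with the convex-hull operation, so $-\mathcal{B} = \hull(-\partial\mathcal{B})$. Together with $\mathcal{O} = \hull\{v_i\}$ this writes the left set as
\[
    \mathcal{O} \oplus (-\mathcal{B}) = \hull\{v_i\} \oplus \hull(-\partial\mathcal{B}).
\]

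Next I would apply Lemma~\ref{lem:minkdistributive}, which lets the convex hull distribute over the Minkowski sum, to merge the two hulls into one,
\[
    \hull\{v_i\} \oplus \hull(-\partial\mathcal{B}) = \hull[\{v_i\} \oplus (-\partial\mathcal{B})],
\]
so that the left set is exactly the convex hull of $\{v_i\} \oplus (-\partial\mathcal{B})$. Finally, since $S$ is convex, the convex-hull property \eqref{eqn:cvxhullproperty} yields $\hull\,\mathcal{A} \subseteq S \Leftrightarrow \mathcal{A} \subseteq S$ with $\mathcal{A} = \{v_i\} \oplus (-\partial\mathcal{B})$, which is precisely the asserted equivalence.

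Each step is routine once the three ingredients (Lemmas~\ref{lem:dBhull} and~\ref{lem:minkdistributive} together with property~\eqref{eqn:cvxhullproperty}) are in hand, so the only point that deserves care is the reflection step: one must justify that negation commutes with the convex hull, i.e. $-\hull\,\partial\mathcal{B} = \hull(-\partial\mathcal{B})$. This holds because negation is an invertible linear map and convex combinations are preserved under linear maps, but it is the single place where the chain could silently break if one were to manipulate $\partial\mathcal{B}$ and its reflection carelessly. Everything else is a direct concatenation of the cited results.
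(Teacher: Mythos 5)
Your proof is correct and follows essentially the same route as the paper's: rewrite $\mathcal{O} \oplus (-\mathcal{B})$ as $\hull\{v_i\} \oplus \hull(-\partial\mathcal{B})$ via Lemma~\ref{lem:dBhull}, merge the hulls with Lemma~\ref{lem:minkdistributive}, and strip the hull using property~\eqref{eqn:cvxhullproperty} and convexity of $S$. Your explicit justification that negation commutes with the convex hull is a small point the paper leaves implicit, but otherwise the two arguments are identical.
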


Lemma \ref{lem:minksum_extension} provides a more efficient condition for outer approximating the Minkowski sum with a set $\Sapprox = \{x \,|\, p(x) \leq 1\}$ as we only have to consider the vertices of $\obs$ and the boundary of $\mathcal{B}$ in our set-containment constraint as follows.
\begin{equation}
1 - p(x) \geq 0 \quad \forall \, x \in {v_i} \oplus (-\partial \mathcal{B}) \quad i \in[K].
\end{equation}

However, it requires the condition that $\Sapprox$ be a convex set. We argue that this is a reasonable constraint as $\mathcal{O} \oplus (-\mathcal{B})$ is itself convex. It is difficult to impose the condition that the 1-sublevel set of $p(x)$, i.e., $\Sapprox$, is convex. Recalling Definition \ref{defn:soscvx}, we will instead impose the sufficient condition that the function $p(x)$ be sos-convex. 

As done previously, we rewrite the set-containment conditions using the generalized $\mathcal{S}$-procedure. We then replace the non-negativity conditions with SOS conditions.

\noindent\textbf{Optimization Problem 2: Convex Outer Approximation}
\begin{mini*}|s|
{P, \mu_{[1:K]}(w)}{-{\log\det}P} {}{}
\addConstraint{P \succeq 0, \,}{p(x) = z(x)^TPz(x)}{\tag{COA}\label{opti:outerapproxcvx}}
\addConstraint{1 - p(v_i - w) - \mu_i(w)(r^2 - w^Tw)}{\in \sos{w},}{\, i \in [K]}
\addConstraint{u^T\nabla^2p(x)u}{\in \sos{x,u}}{}
\end{mini*}

\begin{rem}
The formulation of \eqref{opti:outerapproxcvx} is advantageous in that the multipliers $\mu(w)$ do not have to be SOS and they only depend on $n$ free variables ($w \in \mathbb{R}^n$). In contrast, \eqref{opti:outerapprox} requires SOS multipliers $\lambda(w,y)$ which depend on $2n$ free variables ($w, y \in \mathbb{R}^n$). The former leads to smaller semidefinite programs which scale better with respect to the dimension $n$ or the complexity of $\mathcal{O}$. This is numerically illustrated in the following example.
\end{rem}

\subsection{2D Example}
We generate 1000 random test cases in $\mathbb{R}^2$. For each case we generate a polytope $\mathcal{O}$ with $n \in \{3,4,\hdots,12\}$ vertices $v_i \in [-1,1]^2, i \in [n]$ along with a disk $\mathcal{B}$ with radius $r \in [0,1]$. We form outer approximations $\Sapprox = \{x \,|\, p(x) \leq 1\}$ of the set $\mathcal{M} = \mathcal{O} \oplus (-\mathcal{B})$ using both \eqref{opti:outerapprox} and \eqref{opti:outerapproxcvx}. For each we consider polynomials $p(x)$ of degree 2, 4 and 6.  
To assess the accuracy of our outer approximations, we compute the approximation error as $100 \times \frac{\textup{Area}(\Sapprox)-\textup{Area}(\mathcal{M})}{\textup{Area}(\mathcal{M})}$. Table \ref{tab:1} lists the mean approximation error of the 1000 test cases. Empirically, as we increase the polynomial order, the approximation error is reduced, indicating we are getting better outer approximations. Table \ref{tab:2} lists the mean solve times. As expected, \eqref{opti:outerapproxcvx} has significantly faster solve times than \eqref{opti:outerapprox} due to the smaller semidefinite program. Figure \ref{fig:fit_example_2D_twocol} provides an example of the results.
\begin{figure}[]
    \centering
    \vspace{0.2cm}
    \includegraphics[width=0.45\textwidth]{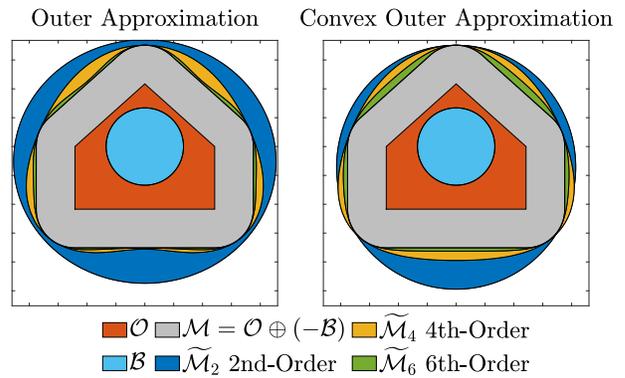}
    \vspace{-0.25cm}
    \caption{Outer Approximation of Minkowski Sum}
    \label{fig:fit_example_2D_twocol}
\end{figure}

\begin{table}[ht] 
\caption{Mean Approximation Error of Minkowski Sums}
\centering
\begin{tabular}{||c | c c c||}
 \hline
 Polynomial Degree & 2 & 4 & 6 \\ [0.5ex] 
 \hline
 Outer Approximation & 40\% & 9\% & 3\% \\ 
 \hline
 Convex Outer Approximation & 25\% & 9\% & 5\% \\
 \hline
\end{tabular}
\label{tab:1}
\end{table}

\begin{table}[!ht]
\caption{Mean Solve Times (s) of Optimization Problems 1 \& 2}
\centering
\begin{tabular}{||c | c c c||}  
 \hline
 Polynomial Degree & 2 & 4 & 6 \\ [0.5ex] 
 \hline
 Outer Approximation & 0.020 & 0.174 & 0.925 \\ 
 \hline
 Convex Outer Approximation & 0.004 & 0.014 & 0.049 \\
 \hline
\end{tabular}
\label{tab:2}
\end{table}
\begin{remark}
For the case when $p(x)$ is a quadratic, the resulting minimum volume $\Sapprox$ can be found exactly using the semidefinite program for finding the minimum volume outer ellipsoid (MVOE) covering a union of ellipsoids \cite{Boyd2004}. In this case each ellipsoid is a ball of radius $r$ centered at vertex $v_i$. Our convex formulation \eqref{opti:outerapproxcvx} can be seen as a generalized form of this result. The non-convex case \eqref{opti:outerapprox} has a smaller feasible set due to the reliance on SOS multipliers and does not return the minimum volume outer ellipsoid in general. Thus \eqref{opti:outerapprox} is only advantageous when seeking non-ellipsoidal approximations.
\end{remark}

\section{Motion Planning Examples}
We demonstrate our proposed obstacle avoidance conditions on an autonomous car and quadcopter example. We solve \TrajOpt~using both the exact representation \eqref{eqn:OBCA} and the approximate representation \eqref{eqn:polyobs} of the collision avoidance constraints. We compute the sub-optimality of the approximate method relative to the exact method as $100 \times \frac{J_{approx}-J_{exact}}{J_{exact}}$ where $J_{approx}, J_{exact}$ are the value of the objective function $l(X,U)$ for the respective solutions. 

In each example, the dynamic constraints \TrajOptDyn ~are implemented using a 4th-order Runge-Kutta integrator with a time-step of $0.02s$ over $N = 150$ steps giving a 3s time horizon. We use $A^\star$ \cite{Hart1968} to find a minimum-distance collision free path on a discretized representation of the environment.\footnote{The $A^\star$ computation time takes an average of 7ms in the case of 10 obstacles for the car example and 41ms for the quadcopter example. In both cases the $A^\star$ step is less than 1/10th the NLP runtime. Bypassing this guess generation and using a naive initialization by linearly interpolating from the initial state $x_S$ to the final state $x_F$ resulted in poor solver reliability.} This path does not consider the dynamics and is generally not kinematically feasible. We use this to initialize our guess for the vehicle's states over time. The approximate representation utilizes 4th-order, convex polynomials to represent the Minkowski sums. For the exact method, similar to \cite{Zhang2021}, we initialize the dual variables $\lambda$ to 0.05. 

\subsection{Autonomous Car}
We adopt the autonomous racing car model from \cite{Liniger2015}. The model has 6 states, $x = \begin{bmatrix} p_x & p_y & \psi & v_x & v_y & \omega \end{bmatrix}^T$ consisting of position $(p_x, p_y)$, orientation $(\psi)$, body velocities $(v_x, v_y)$ and yaw rate $(\omega)$. The inputs are motor duty cycle $(d)$ and steering angle $(\delta)$. We represent the vehicle's shape as a single disk $\mathcal{B}$ of radius $r = 0.05\text{m}$. The center of the disk at time step $k$ is the $(p_{x,k},p_{y,k})$ position of the vehicle: $t(x_k) = \begin{bmatrix} p_{x,k} & p_{y,k} \end{bmatrix}^T$.

We consider a situation in which the vehicle is making forward progress along a straight track while navigating obstacles. The objective is to minimize the 2-norm of the input, $l(X,U) = \|U\|_2^2$. The vehicle starts at state $x_S = \begin{bmatrix} 0 & p_{y,S} & 0 & 1 & 0 & 0 \end{bmatrix}^T$ and must end at position $(3, p_{y,F})$. At each step $k = 0,\hdots,N-1$, the vehicle is subject to box constraints on the position $p_{x,k} \in [0, 3], p_{y,k} \in [0, 0.3]$ and inputs $d_{k} \in [-0.1, 1], \delta_k \in [-1, 1]$. 

\begin{figure*}[]
    \centering
    \vspace{0.2cm}
    \includegraphics[width=0.99\textwidth]{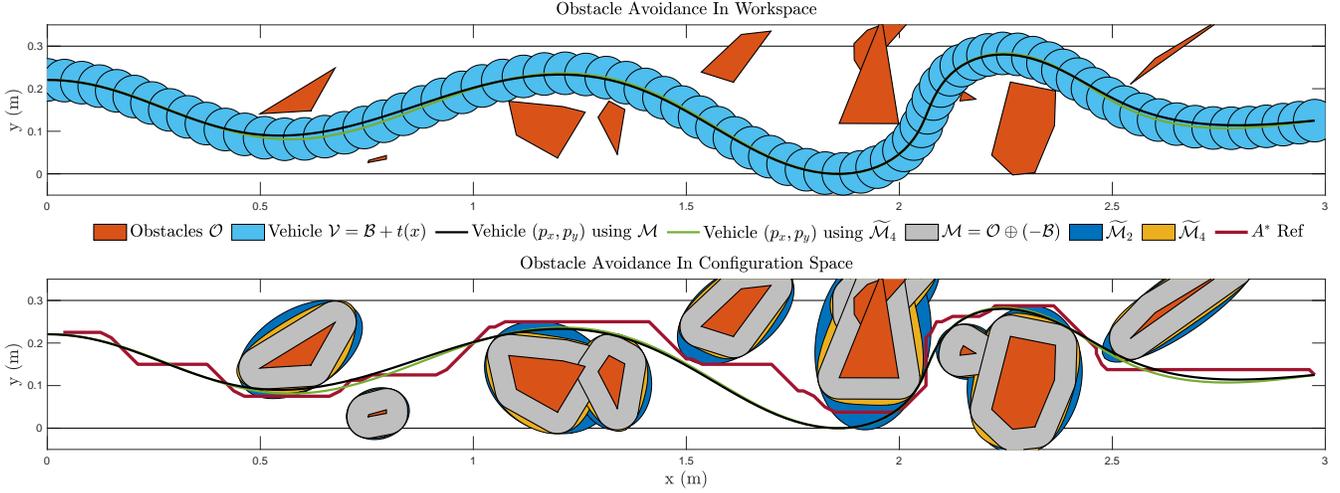}
    \vspace{-0.35cm}
    \caption{Autonomous car navigating obstacles in workspace (upper) and configuration space (lower)}
    \label{fig:car_traj_2x1}
    \vspace{-0.25cm}
\end{figure*}
We consider scenarios consisting of $M \in \{1,2,\hdots,10\}$ obstacles. For each scenario, we generate 100 random test cases in which we vary the start and final $y$ position, $p_{y,S}, p_{y,F} \in [0, 0.3]$ along with the placement and shapes of the $M$ obstacles. Figure \ref{fig:car_traj_2x1} shows a scenario in which the vehicle navigates ten obstacles. We plot the obstacles along with the exact and approximate Minkowski sums of each obstacle and the vehicle $\mathcal{B}$. As the exact method is equivalent to ensuring the vehicle's position ($p_x, p_y)$ remains outside the exact Minkowski sums $\mathcal{M}$, this helps to visualize the conservatism of our outer approximations.  The 4th-order approximate representations $\widetilde{\mathcal{M}}_4$ are quite tight and are only visible as thin yellow borders around the exact Minkowski sums in gray. For reference, we also plot 2nd-order, ellipsoidal approximations $\widetilde{\mathcal{M}}_2$ which are unacceptably conservative as the vehicle cannot progress beyond $p_x = 1.9$ without violating constraints. As the objective penalizes large steering and acceleration commands, the vehicle naturally makes tight maneuvers around the obstacles. The exact method returns a slightly better trajectory because the configuration-space obstacles it must avoid are smaller, requiring less maneuvering. The approximate method makes slightly wider turns, resulting in a 4\% sub-optimal trajectory. However the difference is minor and the resulting trajectories are nearly identical. 

\begin{figure}[]
    \centering
\includegraphics[width=0.45\textwidth]{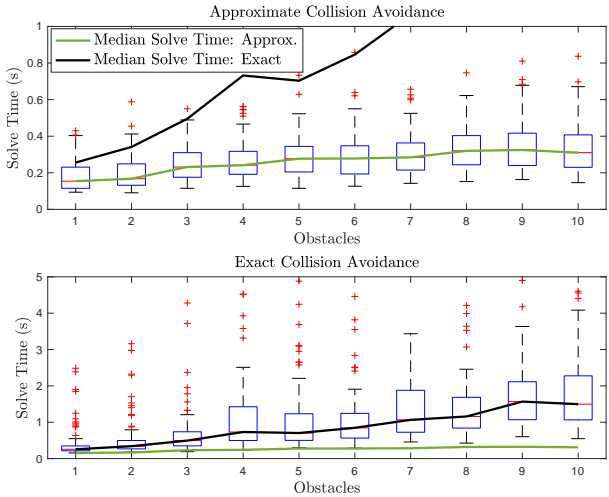}
    \vspace{-0.35cm}
    \caption{Solve time statistics for autonomous car example.}
    \label{fig:solve_times_2d}
    \vspace{-0.5cm}
\end{figure}
Figure \ref{fig:solve_times_2d} shows the solve time statistics of the approximate and exact methods as we vary the number of obstacles present. Comparing median solve times, the approximate method solves 1.6x faster than the exact method when just one obstacle is present. With ten obstacles present, the approximate method solves 4.8x faster. The approximate method shows less variability in the solve times, with a maximum solve time of 0.84s and no failed instances.\footnote{
The solver times reported for the approximate method only reflect the time spent solving the nonlinear program. We do not include the time required to compute the outer approximations. In a real-time motion planning problem, these approximations would only be performed once per obstacle, either offline or online. We note that based on Table \ref{tab:2}, approximating an obstacle with a 4th-order convex polynomial takes 0.014s. If this computation time were included in Figure \ref{fig:solve_times_2d}, the approximate method would still be consistently faster than the exact method. 
} For the exact method, 240 of the 1000 cases either did not converge or exceeded the maximum allowed solve time of 5s. For 746 of the 760 cases in which the exact method was successfully solved, the approximate method returned a trajectory less than 5\% sub-optimal. The worst-case sub-optimality was 18\%.

\subsection{Quadcopter}
We consider the quadcopter model from \cite{Mellinger2012}. The model has 12 states consisting of position $(p_x, p_y, p_z)$, velocity $(v_x, v_y, v_z)$, Euler angles $(\phi, \theta, \psi)$, and body rates $(p, q, r)$. The inputs are the four rotor speeds $\omega_i, i \in [4]$ in scaled values. We represent the quadcopter's shape as a single ball $\mathcal{B}$ of radius $r = 0.25\text{m}$. The center of the ball at time step $k$ is the position of the quadcopter: $t(x_k) = \begin{bmatrix} p_{x,k} & p_{y,k} & p_{z,k} \end{bmatrix}^T$.

We consider a situation in which the quadcopter is navigating a cluttered room with dimensions $10 \times 10 \times 5$. The quadcopter starts at the origin with state $x_S = \mathbf{0}_{12}$ and must end at state $x_F = \begin{bmatrix} p_{x,F} & p_{y,F} & p_{z,F} & \mathbf{0}_9 \end{bmatrix}^T$ while avoiding any obstacles. Here $\mathbf{0}_{i}$ denotes the zero vector in $\mathbb{R}^i$. The objective is to minimize the 2-norm of the rotor speed deviation from a trim condition $l(X,U) = \|U-4.5\|_2^2$ where $\omega_i = 4.5, i \in [4]$ achieves a steady-state, hover condition. At each step $k = 0,\hdots,N-1$, the vehicle is subject to box constraints on the position $p_{x,k} \in [0, 10], p_{y,k} \in [0, 10], p_{z,k} \in [0, 5]$ and inputs $\omega_{i,k} \in [1.2, 7.8], i \in [4]$. 

The environment contains 30 obstacles. We consider 174 different final positions. Table \ref{tab:3} lists the resulting solve times of the nonlinear program. The approximate method solved 8.7x faster than the exact method with respect to median solve times. In 9 instances the exact method failed or exceeded the maximum solve time of 20s. The approximate method was less than 5\% sub-optimal for 155 of the remaining 165 test cases. The worst-case was 16\% sub-optimal. The upper plot of Figure \ref{fig:quadcopter} shows the quadcopter navigating the cluttered environment. The lower plot gives the configuration-space view with the Minkowksi sum approximations shown in yellow.

\begin{table}[]
\vspace{1.5mm}
\begin{center}
\caption{Solve Times Statistics for Quadcopter Example}
\begin{tabular}{||c | c c c||}  
 \hline
  Collision Avoidance & Min. (s) & Median (s) & Max. (s) \\ [0.5ex] 
  \hline
 Approximate (4th-Order) & 0.47 & 0.74 & 2.76 \\ 
 \hline
 Exact & 2.37 & 6.48 & 18.89 \\
 \hline
\end{tabular}
\label{tab:3}
\end{center}
\vspace{-1.5mm}
\end{table}

\begin{figure}[]
    \centering
    \vspace{0.2cm}
    \includegraphics[width=0.485\textwidth]{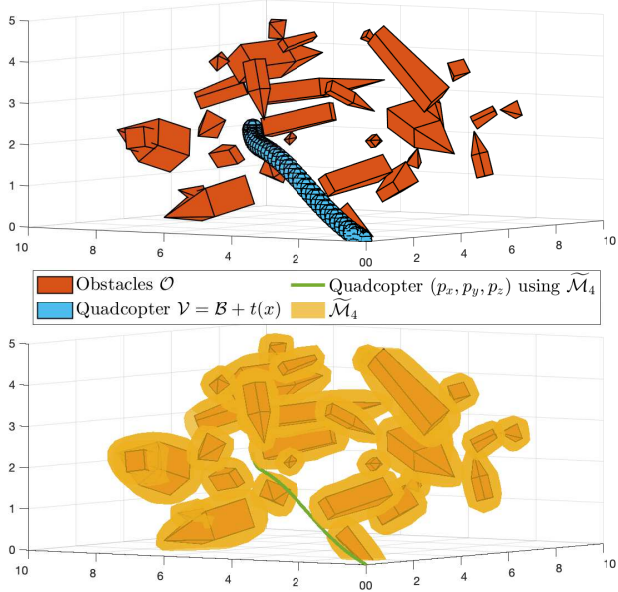}
    \caption{Quadcopter navigating in workspace (upper) and C-space (lower)}
    \label{fig:quadcopter}
\end{figure}

\begin{figure*}[]
    \centering
    \vspace{0.2cm}
    \includegraphics[width=0.99\textwidth]{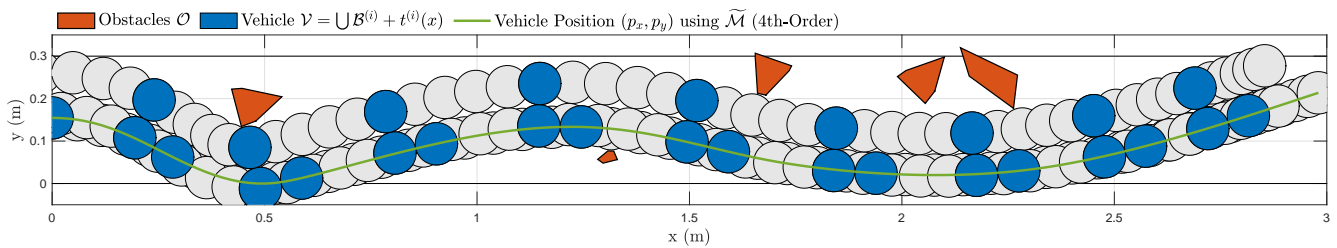}
    \caption{L-shaped autonomous car navigating obstacles in workspace}
    \label{fig:car_traj_multiball}
\end{figure*}
\subsection{Autonomous Car with Multiple-Disc Geometry}
The previous examples used vehicle geometries consisting of a single Euclidean ball. We briefly demonstrate how our method can handle vehicle geometries consisting of multiple Euclidean balls. Returning to the autonomous car example, we introduce two additional discs, each with radius 0.05m, to form an ``L'' shape. The center of each disc is a function of the vehicle's position $(p_x,p_y)$ and orientation $\psi$:
\begin{equation}
    t^{(i)}(x) = \begin{bmatrix} p_x \\ p_y \end{bmatrix} + 
    \begin{bmatrix}
    \cos{\psi} & -\sin{\psi} \\
    \sin{\psi} & \cos{\psi}
    \end{bmatrix}
    \begin{bmatrix}
    l^{(i)}_x \\ l^{(i)}_y
    \end{bmatrix}
\end{equation}
where 
\begin{equation}
   l^{(1)}_{x,y} = \begin{bmatrix}
   0 \\ 0
   \end{bmatrix},  \;
      l^{(2)}_{x,y} = \begin{bmatrix}
   0.05 \\ 0
   \end{bmatrix}, \;
      l^{(3)}_{x,y} = \begin{bmatrix}
   0 \\ 0.05
   \end{bmatrix},
\end{equation}
Because the radius of each disc is identical, we can reuse the same approximation $\widetilde{\mathcal{M}} = \{ y \in \mathbb{R}^2 \, |\, p^{(m)}(y) \leq 1 \}$ for the Minkowski sum of a given obstacle $m$ and a disc of radius 0.05m.\footnote{If this were not the case we would have to compute separate approximations for each unique radius value.} We simply change the argument $t^{(i)}(x)$ supplied to $p$, representing the center of the ball $i$ as a function of the vehicle's state $x$. In this setting \eqref{eqn:polyobs} is replaced with the following:
\begin{equation}
    p^{(m)}(t^{(i)}(x_k)) > 1, \, k \in[N], \, m \in[M], \, i \in [3]. \label{eqn:polyobs_multiball}
\end{equation}
Figure \ref{fig:car_traj_multiball} shows the L-shaped vehicle navigating obstacles.

\subsection{Scaling}
One caveat of using SOS polynomials to approximate indicator functions of sets is that the polynomial may return large values for points far outside of the set. This may be problematic for NLP solvers which are often sensitive to the scaling of the problem. To improve the scaling, we can apply any smooth function $q(z): \mathbb{R}_{\geq 0} \rightarrow \mathbb{R}$ to \eqref{eqn:polyobs} which is strictly-increasing for $z \geq 0$. In the examples shown we have utilized $q(z) = -\textup{exp}(-z)$, replacing \eqref{eqn:polyobs} with:
\begin{equation}
    -\textup{exp}(-p^{(m)}(t(x_k))) > -\textup{exp}(-1), \quad k \in[N], \quad m \in[M].
\end{equation}
As $p$ is a SOS polynomial, the left-hand side of this equivalent formulation takes on values in the range $[-1, 0]$. 
\subsection{Implementation Details}
All examples were solved on a MacBook Pro with a 2.6 GHz 6-Core Intel Core i7 CPU. YALMIP \cite{Lofberg2004} was used in conjunction with MOSEK \cite{Mosek2017} to solve the SOS optimization problems. IPOPT \cite{Wachter2006} with the MA27 linear solver was used to solve the nonlinear optimization problems with exact gradients and Hessians supplied by CasADi \cite{Andersson2019}. Supporting code is available at https://github.com/guthriejd1/mink-obca.

\section{CONCLUSIONS}
This work presented novel obstacle avoidance conditions based on outer approximations of Minkowski sums. This method is advantageous in that it yields a much smaller nonlinear program compared to exactly representing the collision avoidance conditions. On motion planning problems for an autonomous vehicle and quadcopter, the approximate method solved 4.8x and 8.7x faster respectively when navigating cluttered environments. The resulting trajectories exhibited minimal sub-optimality compared to using exact collision avoidance conditions.
Currently our method is limited to cases in which the vehicle is represented by a union of Euclidean balls and the obstacle is a bounded, convex polytope. In future work we plan to consider representations of non-convex obstacles.

\addtolength{\textheight}{-12cm}   

\bibliography{myref}
\bibliographystyle{ieeetr}
\end{document}